\newcommand{\Lf}{\mathcal{L}}
\newcommand{\pset}{\mathcal{P}}
\newcommand{\bx}{\bar{x}}
\newcommand{\dl}{\delta}
\newcommand{\dla}{\delta^{\rm a}}
\newcommand{\dls}{\delta^{\rm s}}
\newcommand{\ra}{r^{\rm a}}
\newcommand{\rs}{r^{\rm s}}
\newcommand{\psets}{\pset^{\rm s}}
\newcommand{\pseta}{\pset^{\rm a}}
\newcommand{\xa}{x^{\rm a}}
\newcommand{\bell}{\bar{\ell}}
\DeclareMathOperator*{\argmax}{arg\,max}
\newcommand{\perv}{{\rm PI}}
\newcommand{\vs}{\vspace{-1mm}}
\newcommand{\gee}{\mathcal{G}}
\newcommand{\poa}{\ensuremath{{\rm PoA}}}
\newcommand{\Lfopt}{\ensuremath{\mathcal{L}^*}}
\newcommand{\Lfnf}{\mathcal{L}^{\rm nf}}
\begin{document}
\title{Partial Altruism is Worse than Complete Selfishness in Nonatomic Congestion Games\thanks{This work was supported by NSF Grant \#ECCS-2013779.}}
\titlerunning{Partial Altruism is Worse than Complete Selfishness}
%
\author{Philip N. Brown\inst{1}\orcidID{0000-0003-3953-0503} }
\authorrunning{P. N. Brown}
%
\institute{University of Colorado Colorado Springs, USA 80918\\
\email{pbrown2@uccs.edu}}
\maketitle              
\begin{abstract} 
We seek to understand the fundamental mathematics governing infrastructure-scale interactions between humans and machines, particularly when the machines’ intended purpose is to influence and optimize the behavior of the humans.
To that end, this paper investigates the worst-case congestion that can arise in nonatomic network congestion games when a fraction of the traffic is completely altruistic (e.g., benevolent self-driving cars) and the remainder is completely selfish (e.g., human commuters).
We study the worst-case harm of altruism in such scenarios in terms of the \emph{perversity index}, or the worst-case equilibrium congestion cost resulting from the presence of altruistic traffic, relative to the congestion cost which would result if all traffic were selfish.
We derive a tight bound on the perversity index for the class of series-parallel network congestion games with convex latency functions, and show three facts: First, the harm of altruism is maximized when exactly half of the traffic is altruistic, but it gracefully vanishes when the fraction of altruistic traffic approaches either 0 or 1.
Second, we show that the harm of altruism is linearly increasing in a natural measure of the ``steepness'' of network latency functions.
Finally, we show that for any nontrivial fraction of altruistic traffic, the harm of altruism exceeds the price of anarchy associated with all-selfish traffic: in a sense, partial altruism is worse than complete selfishness.

\keywords{Selfish Routing  \and Altruism \and Transportation Networks.}
\end{abstract}

\section{Introduction}

As technology becomes ever-more integrated with society, it becomes increasingly urgent to understand and optimize the relationship between human social behavior and the technical performance of infrastructure systems~\cite{Ratliff2018}.
An important test case for this is the problem of routing in transportation networks; it is well-known that if individuals select routes across a city to minimize their own transit delay, this can easily lead to suboptimal traffic congestion~\cite{Roughgarden2005}.
A popular model of such systems is the nonatomic network congestion game, which models traffic as a continuum of infinitely-many infinitesimally-small self-interested agents.
Much research on nonatomic congestion games has investigated how to mitigate the inefficiencies resulting from self-interested behavior, with approaches including monetary incentives~\cite{Ferguson2020a}, centralized routing schemes~\cite{Roughgarden2004}, and information provision~\cite{Cheng2016}.
Another active line of research has sought to understand how perceptual biases in users' costs might affect the quality of the resulting aggregate behavior~\cite{Chen2014,Kleer2017,Meir2015b,Nikolova2015,Sekar2019,Gupta2019} (we provide more detail in Section~\ref{ssec:background}).

In this paper, inspired by the potential afforded by emerging paradigms such as the Internet of Things, we ask the following question: suppose a planner were able to design the behaviors of some group of system users (perhaps a group of self-driving cars).
How should those users be designed to behave? 
Specifically, what cost functions should be applied to guide those users' routing choices?
While many potential cost functions are possible, this paper initiates a study on the impact of endowing a subset of users with \emph{altruistic} cost functions, or ones which lead each user to consider the effects of her actions on those around her.
The key motivation for our study on altruism is this: if a planner could design \emph{all} user cost functions, then altruistic cost functions would in some sense be the unique best choice as they induce optimal Nash equilibria in nonatomic congestion games irrespective of network topology.

Accordingly, the model of heterogeneous altruism proposed in~\cite{Chen2014} is of special relevance to our paper.
In this model, users are diverse in their altruism, each potentially weighing her impact on others in an individual way.
In our context, the central result of~\cite{Chen2014} is that on simple networks (specifically parallel networks on which all users have access to all paths), altruism behaves ``as intended'' -- that is, increasing the fraction of users that are altruistic always improves worst-case equilibrium congestion.
Unfortunately, recent work has highlighted the fact that if transportation networks are sufficiently complex, heterogeneous altruism need not lead to improvements in aggregate behavior; put differently, altruism can be \emph{perverse} on some networks~\cite{Brown2017b,Brown2020b,Sekar2019}.

Thus, to understand when and if artificial altruism can be used to improve routing performance, it is imperative that its potential negative effects be fully characterized.
In this paper, for the class of series-parallel networks with convex cost functions, we investigate the quality of equilibria resulting when a fraction $\rs\in[0,1]$ of traffic is fully selfish and the remainder is fully altruistic.
We assess the potential harm of partial altruism by a metric known as the \emph{perversity index}, defined as the ratio between the equilibrium congestion costs of a heterogeneous population and a fully-selfish population, taken in worst-case over a class of games.
That is, if the perversity index is $1$, partial altruism never induces equilibria that are worse than their fully-selfish counterparts; on the other hand, if the perversity index is large, partial altruism can substantially degrade equilibria relative to fully-selfish equilibria.
We prove the following in Theorem~\ref{thm:main} and its Corollary~\ref{cor1}:
\begin{enumerate}
\item For all classes of latency functions, the worst-case perversity index (i.e., harm of altruism) occurs when $\rs=1/2$; that is, when there is an even split between altruistic and selfish traffic.
\item We provide an expression for the perversity index that is increasing, linear, and unbounded in the ``steepness'' of a network's latency functions (for polynomial latency functions, the steepness is essentially the maximum degree; note however that our result holds for any convex latency function).
\item Our bounds are tight for all $\rs$, and all of our worst-case instances are 3-link parallel networks.
\end{enumerate}
Figure~\ref{fig:plots} plots our bounds for polynomial latency functions of degree~$p\in\{1,\dots,4\}$; note that for each of these, the maximum perversity index is $1+p/2$, and degrades gracefully to $1$ as $\rs\to0$ and $\rs\to1$.
Finally, note that for each $p$, the maximum harm of partial altruism is actually \emph{greater} than the price of anarchy.%
\footnote{
The price of anarchy is the worst-case ratio between the fully-selfish equilibria and the optimal allocation~\cite{Roughgarden2005}.
}
In particular note that for linear-affine latency functions ($p=1$), the perversity index can be as high as $1.5$, exceeding the well-known price of anarchy bound of $4/3$ for the same class of games.
Our Corollary~\ref{cor2} shows that this is ubiquitous: for any $\rs\in(0,1)$, there exists a choice of cost functions (possibly very steep) such that partial altruism causes more harm than selfishness in worst case.

\begin{figure}
	\centering
	\includegraphics[width=0.95\textwidth]{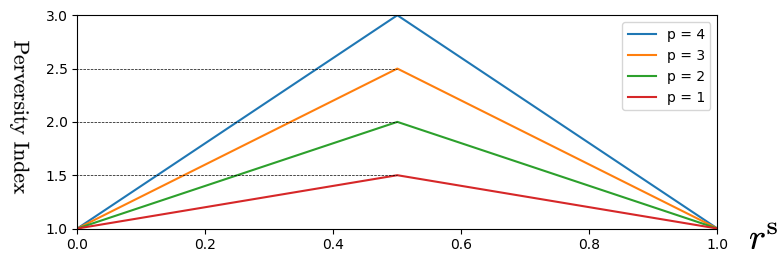} 
	\caption{Perversity index with respect to selfish fraction $\rs$, plotted for polynomial latency functions of degree $p\in\{1,\dots,4\}$ (see Theorem~\ref{thm:main} and Corollary~\ref{cor1}).
	}
	\label{fig:plots}\vs\vs\vs\vs\vs
\end{figure}

\section{Model and Related Work}

\subsection{Routing Problem}

Consider a network routing problem for a network $(V,E)$ comprised of vertex set $V$ and edge set $E$.
There is a unit mass of traffic that needs to be routed from a common source $s$ to a common destination $t$.
In this work we consider that traffic is composed of two types, which we term \emph{selfish} and \emph{altruistic}, with masses $\rs$ and $\ra$, respectively, with $\ra+\rs=1$.
We write $\pset\subseteq 2^E$ to denote the set of \emph{paths} available to traffic, where each path $p\in\pset$ consists of a set of edges connecting $s$ to $t$. 
The selfish (resp., altruistic) traffic can access path set $\psets$ (resp., $\pseta$), where $\psets$ and $\pseta$ are arbitrary subsets of $\pset$.
We say that a graph is \emph{series-parallel} if it is any of 
\begin{enumerate}
\item a single edge,
\item two series-parallel graphs connected in series, or
\item two series-parallel graphs connected in parallel.
\end{enumerate}

For each type $ y\in\{{\rm a},{\rm s}\}$, we write $x^y_p\geq0$ to denote the mass of traffic of type $y$ using path $p\in\pset^y$.
A \emph{feasible flow for type $y$} is written  $x^y\in\mathbb{R}_{\geq0}^{|\pset^y|}$, and is an assignment of $r^y$ units of traffic to paths in $\pset^y$ such that $\sum_{p\in\pset}x_p^y = r^y$.
A \emph{feasible flow} $x\in\mathbb{R}_{\geq0}^{|\pset|}$ is an assignment of traffic to paths such that $x_p:=\sum_{y:p\in\pset^y}x_p^y$, where the corresponding flows $x^y$ are each feasible for their corresponding types.

Given a flow $x$, the flow on edge $e$ is given by $x_e = \sum_{p:e\in p}x_p$.
As before, we may denote the flow of type $y\in\{{\rm a},{\rm s}\}$ on edge $e$ by $x_e^y$.
To characterize transit delay as a function of traffic flow, each edge $e\in E$ is associated with a specific latency function $\ell_e:[0,1]\rightarrow[0,\infty)$, where $\ell_e(x_e)$ denotes the delay experienced by users of edge $e$ when the edge flow is $x_e$.
We adopt the standard assumptions that each latency function is nondecreasing, convex, and continuously differentiable.
We measure the cost of a flow $x$ by the \emph{total latency}, given by
\begin{equation}
\Lf(x) =  \sum_{e\in E}  x_e  \ell_e(x_e)=  \sum\limits_{p\in \pset}  x_p  \ell_p(x), \label{eq:totlatfpath}
\end{equation}
where $\ell_p(x) := \sum_{e\in p}\ell_e(x_e)$ denotes the latency on path $p$ given flow $x$.

Given edge latency function $\ell_e(x_e)$, the \emph{marginal cost} function associated with $\ell_e$ is denoted $\ell_e^{\rm mc}$ and is given by
\begin{equation}
\ell_e^{\rm mc}(x_e) = \ell_e(x_e) + x_e\ell'_e(x_e),
\end{equation}
where $\ell'$ denotes the flow derivative of $\ell$.
%
A \emph{routing problem} is fully specified by $G=\left(V,E,\left\{\ell_e\right\},\pseta,\psets,\rs\right)$.
We write $\gee(\gamma)$ to denote the set of all nonatomic congestion games on series-parallel networks having latency functions with marginal-cost ratio bounded by~$\gamma$; that is, for every $G\in\gee(\gamma)$ and every $\ell_e\in G$, $\ell_e^{\rm mc}(x_e)\leq\gamma\ell_e(x_e)$ for all $x_e\in[0,1]$.

\subsection{Heterogeneous Routing Game}

To understand the potential negative effects of uncoordinated heterogeneous altruistic behavior, we model the above routing problem as a non-atomic congestion game.
We consider each type of traffic to be composed of infinitely-many infinitesimal users, and the cost function of each user is determined by its type.
Given a flow $x$, the cost that a selfish-type user experiences for using path ${p} \in \psets$ is simply the latency of the path:
\begin{equation} \label{eq:selfishdef}
J^{\rm s}_p(x) := \sum\limits_{e\in {p}}\ell_{e}(x_{e}).
\end{equation}

The cost that an altruistic-type user experiences for using path ${p} \in \pseta$ is the marginal cost of the path:
\begin{equation} \label{eq:mcdef}
J^{\rm a}_p(x) = \ell^{\rm mc}_p(x) := \sum\limits_{e\in {p}} \left[\ell_{e}(x_{e}) + x_e\ell_e'(x_e)\right].
\end{equation}
Prior literature justifies this formulation as an appropriate cost function for altruistic users; we refer readers to~\cite{Chen2014} for a detailed exposition.
Informally, note that an altruistic user's link cost is the sum of two components: first, the latency of the link; second, the user's marginal effect on other users of the link -- an altruistic user fully accounts for the negative congestion externalities she imposes on others.
In addition, these altruistic cost functions are fundamental in the literature on taxation for behavior influence in congestion games~\cite{Brown2020b}.

We assume that each user selects the lowest-cost path from the available source-destination paths.
We call a flow $f$ a \emph{Nash flow} if all users are individually using minimum-cost paths given the choices of other users.
That is, the following condition is satisfied for each type $y\in\{{\rm a},{\rm s}\}$:
\begin{equation}
\forall p,p'\in\pset^{y},\ x^y_p>0 \implies J_p^{y}(x)\leq J_{p'}^{y}(x). \label{eq:nfdef}
\end{equation}
It is well-known that a Nash flow exists for any non-atomic congestion game of the above form \cite{Mas-Colell1984}.
Note that a Nash flow must satisfy the well-known Wardrop condition that for each type $y\in\{{\rm a},{\rm s}\}$, all used paths have equal cost~\cite{Wardrop1952}:
\begin{equation}
\forall p,p'\in\pset^{y},\ x^y_p>0 \mbox{ and } x^y_{p'}>0 \implies J_p^{y}(x)= J_{p'}^{y}(x). \label{eq:wardrop}
\end{equation}

This paper frequently considers the effect of ``converting'' all altruistic traffic to selfish traffic.
To that end, given routing problem $G$, we write $\bar{G}$ to denote the \emph{homogenized} version of $G$ in which all traffic behaves selfishly.
The homogenized $\bar{G}$ is identical to $G$ except that the altruistic traffic adopts the selfish cost function~\eqref{eq:selfishdef}; all path sets, latency functions, traffic rates, and network topology remain the same.
When $x$ is a Nash flow for $G$, we often write $\bx$ to denote a Nash flow for $\bar{G}$.

\subsection{Performance Metrics: Price of Anarchy and Perversity Index}

For ease of notation, we write $\Lfnf(G,\rs)$ to denote the total latency of a worst-case heterogeneous Nash flow for routing problem $G$ and selfish fraction $\rs$ (with corresponding altruistic fraction $\ra=1-\rs$).
Let $\Lfnf\left(\bar{G}\right)$ denote the total latency of a Nash flow on homogenized routing problem $\bar{G}$.
Let $\Lfopt(G)$ denote the total latency of an optimal flow on $G$ (note that $\Lfopt(G)$ has no dependency on $\rs$, since it considers only the time cost of flows).
In our context, we define the \emph{price of anarchy} of a class of games $\gee$ with selfish fraction $\rs$ as the worst-case ratio of the cost of a heterogeneous Nash flow with that of an optimal flow, formally written
\begin{equation} \label{eq:poadef}
\poa\left(\gee,\rs\right) \triangleq \sup_{G\in\gee} \frac{\Lfnf(G,\rs)}{\Lfopt(G)}.
\end{equation}
Note that when $\rs=1$ (i.e., all traffic is selfish),~\eqref{eq:poadef} coincides with the standard definition of price of anarchy~\cite{Roughgarden2005}.
At the other extreme, when $\rs=0$ (i.e., all traffic is altruistic), it is well-known that $\poa\left(\gee,0\right)=1$; that is, homogeneous altruistic Nash flows are always optimal~\cite{Chen2014}.

In this paper, we study a related metric known as the \emph{perversity index} which aptly captures the worst-case effects of heterogeneous altruism; this metric was originally proposed to study the risks associated with marginal-cost taxation~\cite{Brown2020b}.
The perversity index captures the harm that altruism causes \emph{relative to selfishness}; it is thus similar in many respects to the \emph{deviation ratio} of~\cite{Kleer2017} and the price of risk aversion of~\cite{Piliouras2013}.
This is a natural modification of the price of anarchy concept obtained by replacing the optimal total latency with the all-selfish total latency $\Lfnf(\bar{G})$:
\begin{equation} \label{eq:pidef}
\perv\left(\gee,\rs\right) \triangleq \sup_{G\in\gee} \frac{\Lfnf(G,\rs)}{\Lfnf\left(\bar{G}\right)}.
\end{equation}

Here, if $\gee$ has a large perversity index, this means that networks exist in $\gee$ for which heterogeneous (partially-altruistic) Nash flows can be considerably worse than corresponding homogeneous all-selfish Nash flows.
Note that for any $\gee$, it holds that $\perv(\gee,0)=1$, since $\rs=0$ leads to optimal flows (and some optimal flows are also homogeneous selfish Nash flows); it also trivially holds that $\perv(\gee,1)=1$.
Furthermore, the perversity index is upper-bounded by the price of anarchy: $\perv(\gee,\rs)\leq\poa(\gee,\rs)$; this is because on any $G$, $\Lfnf(\bar{G})\geq\Lfopt(G)$.

\subsection{Related Work} \label{ssec:background}


\subsubsection{Altruism and Biased Costs in Congestion Games.}
The effects of altruism in nonatomic congestion games are investigated in detail in~\cite{Chen2014}, where it is shown that in parallel networks (i.e., 2-node networks) in which all users can access all paths (in our model, $\pseta=\psets=\pset$), the price of anarchy due to heterogeneous altruism is always less than the price of anarchy for homogeneous selfish populations.
This was extended by~\cite{Brown2017b} which showed that on the same class of networks, the perversity index due to heterogeneous altruism is unity; that is, not only is altruism helpful in worst-case, it is also helpful on every network.
It was also shown in~\cite{Brown2017b} that more complex non-parallel networks exist which do not possess this property, even when $\pseta=\psets=\pset$; on these pathological networks altruism can cause harm relative to corresponding all-selfish flows.
However, the question of how much harm has remained an open question.

Our work is tightly connected with the broader theme of ``biased'' congestion games, where players misinterpret edge cost functions in some systematic way.
Several works have investigated the price of anarchy under various payoff biases such as pessimism~\cite{Meir2015b}, risk-aversion~\cite{Nikolova2015}, and uncertainty~\cite{Sekar2019}.
Analogous to our definition of the perversity index, the authors of~\cite{Lianeas2016} study the ``price of risk aversion,'' which measures how society's risk preferences affect aggregate congestion as compared to ordinary Nash flows.
Similarly,~\cite{Kleer2017} studies the ``deviation ratio,'' which measures essentially the same quantity for arbitrary cost function biases.
However, as we discuss in our technical results, the bounds given by~\cite{Kleer2017} do not apply to a significant portion of the parameter space required to analyze the perversity index of altruistic populations.

\subsubsection{Marginal-Cost Pricing in Congestion Games.}
A great deal of work has focused on improving the price of anarchy through the use of \emph{marginal-cost pricing,} a road tolling scheme which charges users of each edge $e$ a price of
\begin{equation}
\tau_e^{\rm mc}\left(x_e\right)  = x_e  \ell_e'(x_e), \label{eq:mc}
\end{equation}
with the aim of internalizing each user's congestion externality.
Marginal-cost pricing intends to induce an artificial altruism in users' costs, and it has long been known that if all users have a known unit price-sensitivity (i.e., all trade off time and money equally), marginal-cost pricing leads to optimal Nash flows irrespective of network topology~\cite{Beckmann1956,Sandholm2002}.
However, these optimality guarantees vanish in the presence of user heterogeneity; in the extreme case when some users ignore tolls, it is known that marginal-cost pricing is perverse and can induce Nash flows that are considerably worse than the corresponding uninfluenced selfish flows~\cite{Brown2020b}.
Thus, the present paper's results regarding the harm caused by heterogeneous altruistic users immediately imply corresponding results regarding the potential harm caused by imperfect marginal-cost pricing.

\subsubsection{Stackelberg Routing.}
One of the goals of the present paper is to understand how socially-networked autonomous agents (such as self-driving cars) should be designed to optimally interact with and influence human agents.
In light of this, it is important to mention the related notion of Stackelberg routing.
In a Stackelberg routing scenario, a system planner has complete routing control over some fraction of network traffic, and the question is how to route this traffic in such a way that when the remaining traffic routes selfishly, the resulting congestion cost is minimized.
A variety of results exist in this framework, including positive results for simple networks~\cite{Roughgarden2004} and negative results for complex networks~\cite{Bonifaci2010}.

\section{Contribution: Altruists Cause Bounded Harm}

One might hope that increasing the fraction of altruists in a routing game would result in improvements in the quality of resulting Nash flows in a similar way that it does in parallel networks~\cite{Brown2020b}; unfortunately, in this paper we show that this need not be the case.
Nonetheless, for classes of routing games with a bounded marginal-cost ratio, we also show that the harm resulting from heterogeneous altruism is bounded and achieves its worst-case when $\rs=1/2$.
Our main result, capturing these two facts, is given as the following theorem.

\begin{theorem} \label{thm:main}
Let $\gee(\gamma)$ be the class of series-parallel nonatomic congestion games with cost functions having maximum marginal-cost ratio $\gamma\geq1$, and let $\rs\in[0,1]$ be the fraction of selfish traffic.
Then 
\begin{equation} \label{eq:main pi}
\perv\left(\gee(\gamma),\rs\right) = \left\{
\begin{array}{ll}
1+\rs(\gamma-1) 	&\ \mbox{if }\ \rs\leq 1/2 \\
\gamma-\rs(\gamma-1) 	&\ \mbox{if }\ \rs> 1/2.
\end{array}\right.
\end{equation}
In particular, it always holds that 
\begin{equation}
\perv(\gee(\gamma,\rs)) \leq \perv(\gee(\gamma,1/2)) = \frac{1+\gamma}{2}. \label{eq:main bound}
\end{equation}
\end{theorem}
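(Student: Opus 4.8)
The plan is to establish the equality in~\eqref{eq:main pi} by proving a matching upper bound and lower bound for each regime of $\rs$; the bound~\eqref{eq:main bound} then follows by maximizing the piecewise-linear right-hand side of~\eqref{eq:main pi} over $\rs\in[0,1]$, which is elementary since both branches are affine in $\rs$ and they agree at $\rs=1/2$ with common value $(1+\gamma)/2$. So the real content is~\eqref{eq:main pi}, and I would organize it around the structural fact that series-parallel networks with convex latencies admit clean Wardrop/Nash characterizations.

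For the \textbf{upper bound}, I would fix an arbitrary $G\in\gee(\gamma)$, let $x$ be a worst-case heterogeneous Nash flow and $\bx$ a Nash flow on the homogenized $\bar G$, and bound $\Lf(x)/\Lf(\bx)$ edgewise. The first key step is to exploit that, because all altruists use marginal-cost costs~\eqref{eq:mcdef} and the marginal-cost ratio is at most $\gamma$, an altruist's perceived path cost lies between $\ell_p(x)$ and $\gamma\,\ell_p(x)$; combined with the selfish Wardrop condition~\eqref{eq:wardrop} for the selfish traffic, this pins down how far the heterogeneous flow can be from an ordinary Nash flow. The second key step is a variational / convexity argument in the spirit of the standard price-of-anarchy proof for series-parallel networks (e.g.\ comparing $\Lf(x)$ to $\Lf(\bx)$ via $\sum_e (x_e-\bar x_e)\ell_e(\bar x_e)\ge0$-type inequalities, using that $\bx$ is a Nash flow), but carefully tracking the asymmetry between selfish mass $\rs$ and altruistic mass $\ra=1-\rs$. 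I expect the bound $1+\min(\rs,\ra)(\gamma-1)$ to emerge because only the altruistic mass (when $\rs\le 1/2$) or, by symmetry, the "excess" structure when $\rs>1/2$ can be misrouted, and each unit of misrouted mass inflates cost by a factor controlled by $\gamma$.

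For the \textbf{lower bound}, the excerpt tells us the worst cases are $3$-link parallel networks, so I would construct an explicit family: three parallel edges from $s$ to $t$, with latency functions chosen (a constant edge, a "steep" edge realizing the marginal-cost ratio $\gamma$ by using something like $\ell(z)=z^p$ with $p=\gamma-1$, and a third balancing edge) and with the split of the unit mass into $\rs$ selfish and $\ra$ altruistic tuned so that in the heterogeneous Nash flow the altruists pile onto an edge that the selfish flow would have avoided, while in $\bar G$ the homogenized flow spreads out efficiently. I would then compute $\Lf(x)$ and $\Lf(\bx)$ directly and verify the ratio equals $1+\rs(\gamma-1)$ for $\rs\le 1/2$ (and the mirror image for $\rs>1/2$, which should follow by an analogous construction or a duality-type relabeling swapping the roles of the two mass fractions).

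The main obstacle I anticipate is the upper bound, specifically getting a \emph{tight} constant rather than a loose one: the heterogeneous Nash flow $x$ is governed by two coupled Wardrop conditions (selfish paths equalize $\ell_p$, altruistic paths equalize $\ell_p^{\rm mc}$) on a network built from arbitrary series/parallel compositions, and one must show that the worst damage is always "as bad as" the $3$-link parallel instance — i.e., that series composition and more parallel branches cannot help the adversary. I would handle this by induction on the series-parallel structure: prove the ratio bound for a single edge trivially, show it is preserved under series composition (costs and flows add in a way that cannot increase the ratio) and under parallel composition (a convexity/interchange argument showing the worst parallel instance reduces to few branches), reducing the general case to the explicit parallel computation above.
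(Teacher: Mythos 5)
Your overall skeleton (a two-regime upper bound plus a matching 3-link lower bound, then maximizing the piecewise-linear expression at $\rs=1/2$) matches the paper's, but there are two concrete gaps. On the upper bound: the paper does \emph{not} induct on the series-parallel structure. It uses a direct convexity/linearization estimate $\Lf(x)-\Lf(\bx)\leq\sum_{e}(x_e-\bx_e)\,\ell_e^{\rm mc}(x_e)$, splits the path-flow differences into selfish and altruistic parts, kills the altruistic part with the altruistic Wardrop condition, bounds the selfish part by $\rs(\gamma-1)\Lambda^{\rm s}(x)$ using the marginal-cost ratio and the mass constraint, and then — this is the step your sketch has no substitute for — invokes a structural lemma for series-parallel networks (Sekar et al., Prop.~2; also Milchtaich) saying that the selfish traffic's total cost in the heterogeneous Nash flow is no higher than in the all-selfish Nash flow. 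That lemma is precisely where the series-parallel hypothesis enters. Your proposed induction would instead have to show the ratio $\Lf(x)/\Lf(\bx)$ is preserved under series and parallel composition, which is far from routine: the heterogeneous and homogenized Nash flows on a composed network do not restrict to Nash flows on the components with the same selfish fraction on each, so the inductive step (especially the parallel ``interchange'' argument) is an unproved and genuinely hard claim, not a reduction.

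On the lower bound, your construction omits the essential ingredient of \emph{restricted path access}. In the paper's tight instance the altruists may use only paths $1$ and $2$ (i.e., $\pseta=\{1,2\}$) while selfish traffic may use all three; the altruists are then pinned on the constant-cost-$\gamma$ path because their marginal-cost perception of path $2$ also equals $\gamma$ and they cannot escape to the cheap constant path $3$. If all users could access all paths, as your sketch implicitly assumes, the altruists would simply take path $3$; indeed the paper cites prior work showing that on parallel networks with unrestricted access the perversity index is exactly $1$. So the asymmetric path sets are not a convenience but the mechanism, and without them your 3-link family cannot achieve the claimed ratio.
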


The proof of Theorem~\ref{thm:main} is presented in detail in Section~\ref{sec:proof}; before its presentation, we note here several salient details and discuss the consequences of the result.
First, the functional form of the perversity index in~\eqref{eq:main pi} is strikingly simple, being linear in $\gamma$ and piecewise-linear in $\rs$.
For any $\gamma$, its maximum is attained at $\rs=1/2$: that is, worst-case outcomes occur for equal parts selfish and altruistic traffic irrespective of the functional form of cost functions.
As we show in Lemma~\ref{lem:tight}, the bound is tight for any fixed $\gamma$ and $\rs$, in the sense that given $\gamma\geq1$ and $\rs\in[0,1]$, there exists a choice of latency function $\ell$ satisfying $\ell^{\rm mc}\leq \gamma\ell$ which can be used to construct networks which achieve the bound.

A intuitive and commonly-studied class of latency functions is the \emph{polynomial} latency functions of degree $p$; the following corollary grounds Theorem~\ref{thm:main} by deriving the perversity index associated with polynomial latency functions.
\begin{corollary} \label{cor1} 
Let $\gee^p$ be the class of two-terminal series-parallel nonatomic congestion games with polynomial cost functions of maximum degree $p\geq1$.
Then for every $p\in\mathbb{N}$ and every $\rs\in[0,1]$, the following holds:
\begin{equation} \label{eq:poly pi}
\perv\left(\gee(p),\rs\right) = \left\{
\begin{array}{ll}
1+p\rs 	&\ \mbox{if }\ \rs\leq 1/2 \\
1+p(1-\rs) 	&\ \mbox{if }\ \rs> 1/2.
\end{array}\right.
\end{equation}
\end{corollary}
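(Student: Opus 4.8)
The plan is to obtain Corollary~\ref{cor1} as a direct specialization of Theorem~\ref{thm:main}: I will show that the class $\gee^p$ of series-parallel games with polynomial latencies of maximum degree $p$ has exactly the same perversity index as the class $\gee(1+p)$ of games with marginal-cost ratio at most $1+p$, and then substitute $\gamma = 1+p$ into~\eqref{eq:main pi}.

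For the upper bound $\perv(\gee^p,\rs) \le \perv(\gee(1+p),\rs)$, I would prove the inclusion $\gee^p \subseteq \gee(1+p)$. Writing an arbitrary admissible latency function as $\ell_e(x) = \sum_{k=0}^p a_k x^k$ with $a_k \ge 0$ (nonnegativity of the coefficients being forced by nondecreasingness and convexity on $[0,1]$ together with $\ell_e \ge 0$), a direct computation gives $\ell_e^{\rm mc}(x) = \ell_e(x) + x\ell_e'(x) = \sum_{k=0}^p (1+k) a_k x^k \le (1+p)\sum_{k=0}^p a_k x^k = (1+p)\ell_e(x)$ for every $x \in [0,1]$. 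Hence every $G \in \gee^p$ lies in $\gee(1+p)$, and monotonicity of the supremum in~\eqref{eq:pidef} yields the claimed inequality.

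For the matching lower bound I would invoke Lemma~\ref{lem:tight}: for $\gamma = 1+p$ it furnishes a 3-link parallel network attaining $\perv(\gee(1+p),\rs)$ whose latency functions $\ell$ satisfy $\ell^{\rm mc} \le (1+p)\ell$. The point is that the extremal latencies may be taken to be the monomial $\ell(x) = c x^p$ (together with constant latencies on the remaining links): this monomial satisfies $\ell^{\rm mc}(x) = (1+p)\ell(x)$ identically, sitting exactly on the boundary of the constraint defining $\gee(1+p)$, and such functions are polynomials of degree at most $p$, so the instance lies in $\gee^p$. This gives $\perv(\gee^p,\rs) \ge \perv(\gee(1+p),\rs)$, and the two are therefore equal. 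Substituting $\gamma = 1+p$ into~\eqref{eq:main pi} then produces $1 + \rs\big((1+p)-1\big) = 1+p\rs$ for $\rs \le 1/2$ and $(1+p) - \rs\big((1+p)-1\big) = 1+p(1-\rs)$ for $\rs > 1/2$, which is precisely~\eqref{eq:poly pi}. The main obstacle is this lower-bound step: one must be sure the extremal construction of Lemma~\ref{lem:tight}, specialized to $\gamma = 1+p$, is realizable with genuine degree-$p$ polynomial latencies rather than merely with functions of marginal-cost ratio $1+p$; this follows once one observes that a latency function meeting the ratio bound with equality everywhere solves $x\ell'(x) = p\,\ell(x)$ and is therefore forced to be a monomial $cx^p$.
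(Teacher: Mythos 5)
Your proposal is correct and follows essentially the same route as the paper: the paper's proof of Corollary~\ref{cor1} consists exactly of your upper-bound computation showing $\ell^{\rm mc}(x)\leq(p+1)\ell(x)$ for a nonnegative-coefficient polynomial of degree $p$ (i.e., $\gee^p\subseteq\gee(p+1)$), after which it asserts that~\eqref{eq:poly pi} ``follows directly'' from~\eqref{eq:main pi}. Your additional verification that the extremal instance of Lemma~\ref{lem:tight} is realizable with the monomial $\bell(x)=(x/r)^p$ together with constant links is a step the paper leaves implicit and is worth making explicit; the only blemish is your parenthetical claim that nonnegativity of the coefficients is \emph{forced} by monotonicity, convexity, and nonnegativity on $[0,1]$ (it is not --- e.g.\ $x^4-x^3+x^2$ is admissible with a negative coefficient --- but the standard convention, and the one the paper adopts, is to define the polynomial class with nonnegative coefficients, so nothing breaks).
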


\begin{proof}
It suffices to show that if $\ell(x)$ is a polynomial of degree $p$, then $\ell^{\rm mc}(x)\leq(p+1)\ell(x)$; then~\eqref{eq:poly pi} follows directly from~\eqref{eq:main pi}.
To that end, let $\ell(x)=\sum_{i=0}^p a_ix^i$, where $a_i\geq0$ for all $i$. 
Then applying~\eqref{eq:mcdef}, we have
\begin{align}
\ell^{\rm mc}(x) 	&= \sum_{i=0}^p \left( a_ix^i + i a_ix^i\right) \nonumber \\
				&\leq \sum_{i=0}^p (p+1)a_ix^i  \nonumber \\
				&= (p+1)\ell(x).  \nonumber
\end{align}
\hfill\qed
\end{proof}

Note that the expression in~\eqref{eq:poly pi} is $\Theta(p)$ for all $\rs\notin\{0,1\}$.
This contrasts sharply with canonical price of anarchy results for the same class of networks which state that the all-selfish homogeneous $\poa(\gee(p),1)=\Theta(p/\log p)$.
In essence, Corollary~\ref{cor1} indicates that in a worst-case sense, partial altruism is \emph{worse} than uniform selfishness.

We sharpen this observation in our next corollary, which demonstrates that this phenomenon is ubiquitous:
\begin{corollary} \label{cor2} 
Let $\gee(p)$ be the class of two-terminal series-parallel nonatomic congestion games with polynomial cost functions of maximum degree $p\geq1$.
For any $p$, let $r^*$ be defined as
\begin{equation} \label{eq:fbounds}
r^*= \frac{1}{(p+1)^{\frac{p+1}{p}}-p}=\Theta\left(\frac{1}{\log p}\right). 
\end{equation}
Then for every $p$ and every $\rs\in(r^*,1-r^*)$, the following holds:
\begin{equation}
\poa(\gee(p),\rs) > \poa(\gee(p),1). \label{eq:poaineq}
\end{equation}
\end{corollary}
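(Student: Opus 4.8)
The plan is to avoid constructing any new instances and instead chain together two facts already at hand. Recall that $\perv(\gee,\rs)\le\poa(\gee,\rs)$ for every class $\gee$ (noted just after~\eqref{eq:pidef}, since $\Lfnf(\bar{G})\ge\Lfopt(G)$), and that Corollary~\ref{cor1} gives $\perv(\gee(p),\rs)=1+p\min\{\rs,1-\rs\}$. Composing these yields $\poa(\gee(p),\rs)\ge 1+p\min\{\rs,1-\rs\}$, so it suffices to show that $1+p\min\{\rs,1-\rs\}>\poa(\gee(p),1)$ for every $\rs\in(r^*,1-r^*)$; this is also where the gap between the $\Theta(p)$ perversity index and the all-selfish $\Theta(p/\log p)$ price of anarchy becomes quantitative.

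The second step is to bring in the classical closed form for the price of anarchy of degree-$p$ polynomial latency functions with nonnegative coefficients, $\poa(\gee(p),1)=\bigl(1-p(p+1)^{-(p+1)/p}\bigr)^{-1}$. This value applies on the class $\gee(p)$ because the bound is an upper bound on arbitrary networks, while the matching lower bound is already witnessed by a two-link Pigou network, which is two-terminal series-parallel. Now the crux is a one-line algebraic check: solving $1+pr^*=\bigl(1-p(p+1)^{-(p+1)/p}\bigr)^{-1}$ for $r^*$ and clearing the denominator by multiplying numerator and denominator by $(p+1)^{(p+1)/p}$ reproduces exactly $r^*=\bigl((p+1)^{(p+1)/p}-p\bigr)^{-1}$, the quantity in~\eqref{eq:fbounds}. (As a sanity check, $p=1$ gives $r^*=1/3$ and $1+pr^*=4/3$, the familiar linear-affine bound; also $r^*<1/2$ for every $p\ge1$, so $(r^*,1-r^*)$ is nonempty.) Hence $\poa(\gee(p),1)=1+pr^*$ exactly.

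Given this identity, the corollary is immediate: for $\rs\in(r^*,1-r^*)$ we have $\min\{\rs,1-\rs\}>r^*$, so $\poa(\gee(p),\rs)\ge 1+p\min\{\rs,1-\rs\}>1+pr^*=\poa(\gee(p),1)$. It remains to record the asymptotic $r^*=\Theta(1/\log p)$, which follows from $(p+1)^{(p+1)/p}=(p+1)\,e^{\ln(p+1)/p}=p+1+\ln(p+1)+o(\log p)$, whence $(p+1)^{(p+1)/p}-p=\Theta(\log p)$. The only point requiring care is the second step: one must be sure the standard polynomial price-of-anarchy formula genuinely holds over $\gee(p)$ — i.e.\ that the extremal Pigou-type instance is two-terminal series-parallel so that both the upper and the lower bound transfer — and then verify the algebraic identity in~\eqref{eq:fbounds}; the rest is elementary.
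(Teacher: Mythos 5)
Your proposal is correct and follows essentially the same route as the paper's own proof: lower-bound $\poa(\gee(p),\rs)$ by the perversity index from Corollary~\ref{cor1}, invoke the classical formula for $\poa(\gee(p),1)$, and observe that $r^*$ is exactly the solution of $1+pr^*=\poa(\gee(p),1)$. You simply spell out the algebraic identity, the check that $r^*<1/2$, and the fact that the extremal Pigou instance lies in the series-parallel class --- details the paper leaves implicit.
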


\begin{proof}
Here, we leverage the following well-known expression (see~\cite{Roughgarden2003} for details):
\begin{equation}
\poa(\gee(p),1) = \frac{1}{1-p(p+1)^{-(p+1)/p}}=\Theta\left(\frac{p}{\log p}\right). \label{eq:poa}
\end{equation}
Recall that $\poa(\gee(p),\rs)\geq\perv(\gee(p),\rs)$ for all $\rs$, and note that $\perv(\gee(p),\rs)=\perv(\gee(p),1-\rs)$.
Let $r^*$ be the solution to $1+pr^*=\poa(\gee(p),1)$; in can be shown that this solution always occurs on the interval $[0,1/2]$.
Thus, due to the functional form of~\eqref{eq:poly pi},~\eqref{eq:poaineq} holds for any $\rs\in(r^*,1-r^*)$.
\hfill\qed
\end{proof}

This result is striking, since it implies that for very steep cost functions, even a very small fraction of selfish traffic is sufficient to drive the heterogeneous price of anarchy above the homogeneous price of anarchy, and our Lemma~\ref{lem:tight} shows that this can be realized on very simple 3-link networks.
Note that Corollary~\ref{cor2} is similar in spirit to a Stackelberg routing result of~\cite[Theorem 3.1]{Bonifaci2010} which can be used to lower-bound the price of anarchy of heterogeneous Nash flows in our context.
In~\cite{Bonifaci2010}, the authors show that the price of anarchy is unbounded for any fixed fraction $\rs>0$ of selfish traffic.
However, that paper left open an important question: is the price of anarchy with heterogeneous traffic at least \emph{improved} relative to the all-selfish price of anarchy?
Corollary~\ref{cor2} answers this open question negatively: if we place no restriction on allowable cost functions, then for every $\rs\in(0,1)$, the price of anarchy is (1) unbounded, and (2) greater than if all traffic were selfish.

\section{Proof of Theorem~\ref{thm:main}} \label{sec:proof}

\subsection{Overview}
The proof of Theorem~\ref{thm:main} is completed in three steps, each with an associated lemma.
\begin{enumerate}
\item First, Lemma~\ref{lem:low rs ub} derives an upper bound on the perversity index for the case that $\rs\leq1/2$.
\item Second, Lemma~\ref{lem:hi rs ub} derives an upper bound on the perversity index for the case that $\rs>1/2$.
We note that Lemma~\ref{lem:hi rs ub} is also a consequence of~\cite[Theorem~1]{Kleer2017}, but we state it in terms of our specific model and provide a proof that is highly simplified relative to that of~\cite{Kleer2017}.
\item Finally, Lemma~\ref{lem:tight} provides a family of problem instances which exhibit a matching lower bound for Lemmas~\ref{lem:low rs ub} and~\ref{lem:hi rs ub}, completing the proof.
\end{enumerate}
We state the lemmas here and then proceed with their proofs in subsequent subsections.

\begin{lemma} \label{lem:low rs ub}
Let $G$ be a series-parallel nonatomic congestion game with selfish fraction $\rs\in[0,1]$ and maximum marginal-cost ratio $\gamma$.
If $x$ is a heterogeneous Nash flow for $G$ and $\bx$ is an all-selfish Nash flow for $\bar{G}$, the following holds:
\begin{equation}
\Lf(x)\leq\left((1-\rs)+\rs\gamma\right)\Lf(\bx). \label{eq:low rs ub}
\end{equation}
\end{lemma}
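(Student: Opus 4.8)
The natural approach is to induct on the recursive definition of the series-parallel network $G$, proving~\eqref{eq:low rs ub} for every series-parallel $G$ and every $\rs\in[0,1]$ at once, with an inductive hypothesis strengthened to also control the equilibrium latency seen by each population (relative to the all-selfish equilibrium latency at the relevant demand), since that extra information is what is needed to push the induction through a parallel composition.

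The base case ($G$ a single edge) is immediate: both $x$ and $\bx$ route the entire unit of demand across that edge, so $\Lf(x)=\Lf(\bx)$. For a series composition $G = G_1\cdot G_2$, every feasible flow routes mass $1$ through each factor, and since path latencies and path marginal costs are additive across the series cut, the restriction of a heterogeneous Nash flow (resp.\ an all-selfish Nash flow) to $G_i$ is again a heterogeneous Nash flow (resp.\ all-selfish Nash flow) on $G_i$ with the \emph{same} type split $(\rs,\ra)$. As $\Lf$ and the homogenized cost are additive over the cut and each factor has marginal-cost ratio at most $\gamma$, summing the two inductive bounds gives the bound for $G$.

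The parallel composition $G = G_1\parallel\cdots\parallel G_k$ is the crux and the only place the full force of series-parallelism is used. Here the unit of demand splits across the factors, the split can differ between $x$ and $\bx$, and the selfish/altruistic ratio inside a factor need not be $\rs:\ra$. Write $r_j = r^{\rm s}_j + r^{\rm a}_j$ for the heterogeneous flow through $G_j$, $\bar r_j$ for the homogenized flow through $G_j$, and $\psi_j(\cdot)$ for the all-selfish equilibrium latency of $G_j$ as a function of demand, which is nondecreasing because series-parallel networks are Braess-free. The inductive hypothesis bounds the heterogeneous cost inside $G_j$ by $(r^{\rm a}_j+\gamma\,r^{\rm s}_j)\,\psi_j(r_j)$; summing, the coefficients add to exactly $\ra+\gamma\rs = 1+\rs(\gamma-1)$, so it would suffice to show $\psi_j(r_j)\le\bar\lambda:=\Lf(\bx)$ for every $j$ with positive coefficient. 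This is automatic for any factor not overloaded relative to the homogenized split ($r_j\le\bar r_j$): such a factor carries positive homogenized flow, hence has homogenized equilibrium latency exactly $\bar\lambda\ge\psi_j(r_j)$. The remaining case — overloaded factors, $r_j>\bar r_j$ — is the main obstacle. The point to exploit is that a factor is overloaded in the heterogeneous equilibrium precisely because the marginal-cost-minimizing altruists find it attractive, which (through the altruists' Wardrop condition) keeps its latency relatively flat and its internal equilibrium close to optimal, so the inductive bound for that factor has enough slack to absorb the excess. Making this quantitative — determining, via the two populations' Wardrop conditions across the parallel factors, exactly how much flow the altruists reroute — is the delicate step.

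It is worth noting why a purely variational-inequality argument cannot give this: combining the selfish and altruistic Wardrop conditions on $G$ against the deviation that routes each population a scaled copy of $\bx$, together with $\ell\le\ell^{\rm mc}\le\gamma\ell$, produces $\Lf(x)\le(\rs+\ra\gamma)\sum_e\bx_e\ell_e(x_e)$, hence a bound with coefficient $\rs+\ra\gamma$ — exactly the bound of Lemma~\ref{lem:hi rs ub}, which is weaker than $1+\rs(\gamma-1)$ whenever $\rs<1/2$, because the factor-$\gamma$ perception loss naturally attaches to the altruistic population. Improving it to $1+\rs(\gamma-1)$ genuinely requires using that (for $\rs\le1/2$) the correctly-perceiving users are the minority and the majority's bias points in the congestion-correcting direction, which is precisely what the parallel step accomplishes.
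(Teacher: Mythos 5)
There is a genuine gap: your induction reduces everything to the parallel-composition step, and within that step to the ``overloaded'' factors with $r_j>\bar r_j$, and that case is exactly where all the difficulty of the lemma lives --- yet it is left as ``the delicate step'' with only a qualitative hope that the altruists' Wardrop condition provides enough slack. The condition you propose to verify, $\psi_j(r_j)\le\bar\lambda$, in fact \emph{fails} on the paper's own tight instance (Lemma~\ref{lem:tight}): there path~$1$ carries $1-r>0$ in the heterogeneous flow but $0$ in the homogenized flow, and its latency is $\gamma>1=\bar\lambda$. So the overloaded case is not a boundary technicality to be absorbed; it is the entire content of the bound, and no quantitative argument is supplied. (A secondary issue: since $\pseta$ and $\psets$ are arbitrary subsets of $\pset$, the restriction of a Nash flow to a series factor need not be a Nash flow of that factor, so even the series step needs more care than stated.)

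Your closing claim --- that a variational argument necessarily attaches the factor $\gamma$ to the altruistic mass and hence cannot beat $\rs+\ra\gamma$ --- is contradicted by the paper's actual proof, which is a global convexity/linearization argument and never inducts on the network structure. The trick you are missing is to bound the cost \emph{difference} directly: convexity gives $\Lf(x)-\Lf(\bx)\le\sum_{e}(x_e-\bx_e)\,\ell_e^{\rm mc}(x_e)$, and one then decomposes the path displacements $\dl_p=\dls_p+\dla_p$ by type. The altruistic part $\sum_p\dla_p\ell_p^{\rm mc}(x)$ is nonpositive because altruists equalize marginal costs on their used paths, so the factor $\gamma$ attaches only to the \emph{selfish} displacement, whose positive part sums to at most $\rs$; this yields $\Lf(x)\le\Lf(\bx)+\rs(\gamma-1)\Lambda^{\rm s}(x)$, where $\Lambda^{\rm s}(x)$ is the common selfish latency. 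Series-parallelism enters only once, through the result of Sekar et al.\ (Lemma~\ref{lem:borrowed}) that the selfish population's cost at $x$ is no larger than at $\bx$, which converts $\Lambda^{\rm s}(x)$ to $\Lambda^{\rm s}(\bx)=\Lf(\bx)$ and finishes the proof. If you want to salvage your induction, the overloaded-factor case would need an argument of essentially this strength, so the inductive scaffolding buys little.
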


Lemma~\ref{lem:low rs ub} exhibits the intuitive feature that if a very small fraction of traffic is selfish (i.e., $\rs$ is very close to $0$), this small amount of selfish traffic can cause very little harm.
In the limit as $\rs\to0$, this resolves to the well-known fact that if $\rs=0$, then $\Lf(x)\leq\Lf(\bx)$ (since a homogeneous altruistic Nash flow is optimal).
However, the bound in~\eqref{eq:low rs ub} is demonstrably loose for large $\rs$ (e.g., $\ra=0$), since it resolves to the trivially loose bound of $\Lf(x)\leq\gamma\Lf(\bx)$ where both $x$ and $\bx$ are all-selfish Nash flows.%
\footnote{In any nonatomic congestion game satisfying our assumptions, every homogeneous selfish Nash flow is well-known to have the same total cost~\cite{Cole2003,Sandholm2002}.} %
Fortunately, the following Lemma~\ref{lem:hi rs ub} addresses the case of large $\rs$ and offers a degree of symmetry with respect to the bound in Lemma~\ref{lem:low rs ub}.

\begin{lemma} \label{lem:hi rs ub}
Let $G$ be a series-parallel nonatomic congestion game with selfish fraction $\rs\in[0,1]$ and maximum marginal-cost ratio $\gamma$.
If $x$ is a heterogeneous Nash flow for $G$ and $\bx$ is an all-selfish Nash flow for $\bar{G}$, the following holds:
\begin{equation}
\Lf(x)\leq\left(\rs+(1-\rs)\gamma\right)\Lf(\bx). \label{eq:hi rs ub}
\end{equation}
\end{lemma}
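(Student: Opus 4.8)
The goal is to bound $\Lf(x)$, the total latency of a heterogeneous Nash flow (selfish mass $\rs$, altruistic mass $\ra = 1-\rs$), against $\Lf(\bx)$, the total latency of the all-selfish Nash flow on the homogenized instance. Since we are in the regime $\rs > 1/2$ where the altruistic fraction is small, the strategy should mirror Lemma~\ref{lem:low rs ub} with the roles of the two types exchanged: there a small selfish mass causes little harm; here a small altruistic mass causes little harm, but now measured against a selfish-Nash baseline rather than an altruistic-optimal one. The key structural fact I would exploit is that on a series-parallel network, a heterogeneous Nash flow admits a variational characterization: using $\bx$ as a feasible comparison flow, each traffic type's equilibrium condition gives a one-sided inequality. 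For the selfish mass, the Wardrop condition \eqref{eq:wardrop} gives $\sum_p \xs_p \ell_p(x) \le \sum_p \bar{x}^{\rm s}_p \ell_p(x)$ for any feasible redistribution $\bar{x}^{\rm s}$ of the selfish mass; for the altruistic mass, the analogous inequality holds with $\ell^{\rm mc}_p$ in place of $\ell_p$. The plan is to add these up cleverly, use $\ell^{\rm mc}_e \le \gamma \ell_e$ on the altruistic terms, and then relate the resulting expression back to $\Lf(\bx)$.

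First I would set up the variational inequalities precisely. Decompose $x_e = \xs_e + \xa_e$ on each edge. The selfish Wardrop inequality applied against the portion of $\bx$ of mass $\rs$ (one can split $\bx$ by a $\rs{:}\ra$ ratio on each path, or simply use that $\bx$ itself is selfish-feasible) yields a bound of the form $\sum_e \xs_e \ell_e(x_e) \le \sum_e \bar x_e^{(\rs)} \ell_e(x_e)$ where $\bar x^{(\rs)}$ is the $\rs$-scaled comparison flow. The altruistic side gives $\sum_e \xa_e \ell^{\rm mc}_e(x_e) \le \sum_e \bar x_e^{(\ra)} \ell^{\rm mc}_e(x_e)$. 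Adding, the left side is $\sum_e [\xs_e \ell_e(x_e) + \xa_e \ell^{\rm mc}_e(x_e)]$, which is $\ge \sum_e x_e \ell_e(x_e) = \Lf(x)$ since $\ell^{\rm mc} \ge \ell$; on the right side, bound the altruistic marginal-cost terms by $\gamma \ell_e$ and the selfish terms trivially, obtaining an upper bound $\le \sum_e (\bar x_e^{(\rs)} + \gamma \bar x_e^{(\ra)}) \ell_e(x_e) \le (\rs + \gamma\ra)\sum_e \bar x_e \ell_e(x_e)$ if $\bx$ dominates the comparison flow edgewise --- and this last sum is $\sum_e \bar x_e \ell_e(x_e)$ evaluated at argument $x_e$, not $\bar x_e$, so there is still work to do.

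The remaining step, and I expect it to be the main obstacle, is to pass from $\sum_e \bar x_e \ell_e(x_e)$ (comparison flow mass against equilibrium congestion levels) to $\Lf(\bx) = \sum_e \bar x_e \ell_e(\bar x_e)$. This is precisely the classical ``variational inequality vs.\ actual cost'' gap that arises in price-of-anarchy proofs, and it is where series-parallelness is essential: on a series-parallel graph, the selfish Nash flow $\bx$ is the \emph{optimal} flow for the potential $\sum_e \int_0^{y} \ell_e$, and more usefully, series-parallel networks have the property (going back to Milchtaich and used in the biased-games literature) that routing extra traffic does not decrease any edge's equilibrium cost along used paths, which lets one compare congestion levels monotonically. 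I would either invoke this monotonicity directly to argue $\ell_e(x_e)$ and $\ell_e(\bar x_e)$ are comparably ordered along the support, or --- cleaner --- observe that since $x$ and $\bx$ route the \emph{same total mass}, $\bx$ being a selfish Nash flow gives $\sum_e \bar x_e \ell_e(\bar x_e) \le \sum_e x_e \ell_e(\bar x_e)$, and then a second series-parallel argument controls the cross term $\sum_e x_e \ell_e(\bar x_e)$ against $\Lf(x)$ and $\Lf(\bx)$. Chaining the inequalities gives $\Lf(x) \le (\rs + \ra\gamma)\Lf(\bx)$, which is \eqref{eq:hi rs ub}.

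Finally, I note that the assumption $\rs > 1/2$ is not actually needed for the inequality itself --- the bound \eqref{eq:hi rs ub} is valid for all $\rs \in [0,1]$, but it is only the binding one when $\rs > 1/2$, at which point it is tighter than \eqref{eq:low rs ub}; the two lemmas together with Lemma~\ref{lem:tight} then pin down the perversity index exactly. The only place care is truly required is ensuring every application of series-parallel structure is legitimate, since all of these monotonicity/optimality facts can fail on general networks (indeed that failure is the whole point of the paper's title).
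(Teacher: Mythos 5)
Your high-level plan (use each type's equilibrium condition against a comparison flow, pay a factor $\gamma$ only on the altruistic mass) is reasonable, but the proof as written has a genuine gap exactly where you flag it: the passage from $\sum_e \bar{x}_e\,\ell_e(x_e)$ to $\Lf(\bx)=\sum_e \bar{x}_e\,\ell_e(\bar{x}_e)$ is never actually carried out. The two escape routes you sketch do not close it. The Nash variational inequality for $\bx$ gives $\sum_e \bar{x}_e\ell_e(\bar{x}_e)\leq\sum_e x_e\ell_e(\bar{x}_e)$, which is a \emph{lower} bound on the quantity you need to \emph{upper}-bound and so points in the wrong direction; and the Milchtaich-style monotonicity statement you invoke ("routing extra traffic does not decrease any edge's equilibrium cost") concerns comparisons between Nash flows of different total mass, whereas here $x$ and $\bx$ carry the same mass and differ only in type composition, so it does not apply as stated. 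The "second series-parallel argument" you defer to is precisely the missing content of the proof. A secondary technical issue: splitting $\bx$ in a $\rs{:}\ra$ ratio on each path need not produce a feasible comparison flow for the selfish type, because the selfish type is restricted to $\psets$ while $\bx$ may place mass on paths in $\pseta\setminus\psets$; the correct comparison flows are the type-components $\bx^{\rm s}$ and $\bx^{\rm a}$ of $\bx$ itself.

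The paper closes this gap by a different and more direct route that you should be aware of. It splits $\Lf(x)=\sum_p x^{\rm s}_p\ell_p(x)+\sum_p x^{\rm a}_p\ell_p(x)$ and bounds the two pieces separately. The selfish piece is handled by the borrowed Lemma~\ref{lem:borrowed} (Sekar et al., Proposition 2; deducible from Milchtaich): on series-parallel networks the \emph{aggregate} latency cost paid by the selfish type in the heterogeneous Nash flow is at most what it pays in the all-selfish flow, i.e.\ $\sum_p x^{\rm s}_p\ell_p(x)\leq \rs\,\Lambda^{\rm s}(\bx)$. Note this is an aggregate comparison of $x^{\rm s}$ evaluated at $x$ against $\bx^{\rm s}$ evaluated at $\bx$ --- not a per-path or cross-term statement, which is why it sidesteps the obstacle your variational approach runs into. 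The altruistic piece is bounded pointwise: altruists at equilibrium all pay the common minimal marginal cost over $\pseta$, which is at most $\ell^{\rm mc}_{\bar p}(x)\leq\gamma\ell_{\bar p}(x)\leq\gamma\Lambda^{\rm s}(\bx)$ for a suitable reference path $\bar p$, giving $\sum_p x^{\rm a}_p\ell_p(x)\leq \ra\gamma\Lambda^{\rm s}(\bx)$. Summing and using $\Lf(\bx)=(\rs+\ra)\Lambda^{\rm s}(\bx)$ yields \eqref{eq:hi rs ub}. To repair your write-up you would need to either import Lemma~\ref{lem:borrowed} (at which point the variational-inequality scaffolding becomes unnecessary) or supply a genuine proof of the cross-term comparison, which is essentially equivalent in difficulty. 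You are correct on one point of framing: the bound holds for all $\rs\in[0,1]$ and is simply the binding one when $\rs>1/2$.
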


Much like before, Lemma~\ref{lem:hi rs ub} illustrates the intuitive feature that if a very \emph{large} fraction of traffic is selfish (i.e., $\rs$ is very close to $1$), the overall Nash flow behaves very much like a homogeneous selfish Nash flow.
Our final lemma shows that these bounds are tight by constructing a family of networks which achieve the bounds.

\begin{lemma} \label{lem:tight}
Let $\gamma>1$ and $\rs\in[0,1]$.
Then there exists a network $G\in\gee(\gamma)$ such that 
\begin{equation}
\frac{\Lfnf(G,\rs)}{\Lfnf\left(\bar{G}\right)}\geq \left\{
\begin{array}{ll}
1+\rs(\gamma-1) 	&\ \mbox{if }\ \rs\leq 1/2 \\
\gamma-\rs(\gamma-1) 	&\ \mbox{if }\ \rs> 1/2.
\end{array}\right.
\end{equation}
\end{lemma}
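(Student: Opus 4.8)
The plan is to construct, for each pair $(\gamma,\rs)$, a one-parameter family of three-link parallel networks whose heterogeneous-to-homogeneous cost ratio converges to the claimed value; the two regimes $\rs\le 1/2$ and $\rs>1/2$ are handled by separate but analogous families. In each family the essential structural ingredient is that the selfish and altruistic populations are given \emph{different} two-element subsets of the three links as their path sets $\psets,\pseta\subseteq\pset$; this is unavoidable, since on parallel networks with $\psets=\pseta=\pset$ the perversity index is exactly $1$~\cite{Brown2017b}, so the harm must be driven by the two populations having different option sets. Concretely, two of the links will be constant-latency links of heights $c$ and $h$, and the third, shared by both populations, will carry a single convex latency function $\ell$ whose marginal-cost ratio is $\le\gamma$ on all of $[0,1]$ and equals $\gamma$ at the relevant operating flow. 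For $\gamma\ge2$ one may take $\ell(x)=\lambda x^{\gamma-1}$ (constant marginal-cost ratio $\gamma$); for $1\le\gamma<2$ one may take the affine function $\ell(x)=\lambda\big[(2-\gamma)+(\gamma-1)x\big]$, whose marginal-cost ratio increases in $x$ to the value $\gamma$ at $x=1$. The scalars $\lambda,c,h$ are free parameters, tuned at the end.

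For a given instance in the family I would first pin down the homogenized Nash flow $\bx$ on $\bar G$ via the two-population Wardrop conditions~\eqref{eq:wardrop} restricted to the respective path sets, and evaluate $\Lf(\bx)$; this is legitimate because all homogeneous-selfish Nash flows share the same total latency. The parameters are arranged so that, when all traffic is selfish, the soon-to-be-altruistic population piles onto the shared non-constant link at a latency only slightly above the other population's constant option. I would then exhibit a heterogeneous Nash flow $x$ on $G$ and verify~\eqref{eq:nfdef} for \emph{both} types simultaneously: the selfish traffic latency-balanced across its two links, the altruistic traffic marginal-cost-balanced across its two. The perverse effect is engineered as follows: at the heterogeneous equilibrium the selfish traffic ``backfills'' the shared link until $\ell(x_e)$ reaches its own outside option $c$, which simultaneously forces $\ell^{\rm mc}(x_e)$ above the altruistic population's outside option $h$, driving \emph{all} altruistic mass off the shared link and onto its costly private link of latency $h$. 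Since the backfill leaves the shared link's flow (hence cost) essentially as it was in $\bar G$, while the altruistic mass moves from latency $\approx c$ up to latency $h$, the total latency rises by roughly (altruistic mass)$\,\cdot\,(h-c)$; taking $h\uparrow\gamma c$ (the largest value the marginal-cost-ratio bound permits) and $\ell$ as flat as convexity permits drives $\Lf(x)/\Lf(\bx)$ to the claimed expression. Where several Nash flows coexist — which is exactly what happens at the indifference flow — one selects the worse, as the definition of $\Lfnf(G,\rs)$ permits.

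I expect the main obstacle to be this third step, and specifically the need to satisfy three competing demands on $\ell$ at once: (i) global convexity, monotonicity and $C^1$-smoothness with marginal-cost ratio $\le\gamma$ everywhere; (ii) marginal-cost ratio equal to $\gamma$ at the operating flow, so the ``drive-off'' inequality $\ell^{\rm mc}(x_e)>h$ can hold with $h$ close to $\gamma c$; and (iii) $\ell$ not too large past the operating flow, since the value of $\ell$ where the altruists sit in $\bar G$ enters the denominator $\Lf(\bx)$. A function that is steep at $x_0$ cannot be nearly flat just beyond $x_0$, so these demands genuinely conflict and the bound is attained only in the limit of the parametric family. Reconciling them for $\rs>1/2$ — where the majority selfish population can backfill the shared link all the way out to the flow that carries the altruists in $\bar G$ — is relatively clean; the $\rs\le1/2$ case forces a more delicate variant, since the minority selfish population cannot backfill as far, so the instance must be reorganized so that only a fraction $2\rs$ of the traffic participates in the harmful interaction while the remaining altruistic mass is routed benignly, and verifying that variant is where the bulk of the work lies.
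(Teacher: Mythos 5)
Your construction is essentially the paper's: a 3-link parallel network with $\pseta=\{1,2\}$, $\psets=\pset$, constant links of latency $\gamma$ and $1$, and a shared convex link $\bell$ chosen with $\bell(r)=1$ and $\bell^{\rm mc}(r)=\gamma$ at the operating flow $r=\max\{\rs,\ra\}$, with the two regimes handled exactly as you describe (all altruistic mass $1-\rs$ displaced onto the expensive link when $\rs>1/2$; only a mass $\rs$ of it displaced, the rest riding benignly on the shared link, when $\rs\le1/2$). The paper's instantiation also dissolves the tensions you flag at the end: because the shared link carries the same flow $r$ in both the heterogeneous and homogenized equilibria and the resulting altruistic indifference $\ell^{\rm mc}_1=\bell^{\rm mc}(r)=\gamma$ is broken adversarially (as $\Lfnf(G,\rs)$ permits), the bound is attained exactly by a single network rather than only in a parametric limit, and the behavior of $\bell$ beyond $r$ never enters the cost computation---it matters only for membership in $\gee(\gamma)$, which the paper discharges with a footnote assuming closure of the latency class under scaling.
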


\subsection{Proofs of Supporting Lemmas}

First, note that the proofs of Lemmas~\ref{lem:low rs ub} and~\ref{lem:hi rs ub} require the following technical result.
We first state the following lemma, borrowed from~\cite[Proposition 2]{Sekar2019} and framed specifically in our context (this can also be deduced from~\cite[Lemma 2]{Milchtaich2006}):
\begin{lemma}[Sekar et al., 2019] \label{lem:borrowed}
Let $G$ be a series-parallel nonatomic congestion game.
If $x$ is a heterogeneous Nash flow for $G$ and $\bx$ is an all-selfish Nash flow for $\bar{G}$, the total cost experienced by selfish traffic in $x$ is no higher than that in $\bx$:
\begin{equation}
\sum_{p\in\pset} x^{\rm s}_p\ell_p(x) \leq \sum_{p\in\pset} \bx^{\rm s}_p\ell_p(\bx).
\end{equation}
\end{lemma}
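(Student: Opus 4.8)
The plan is to reduce the inequality to a comparison of equilibrium latencies and then prove that comparison by induction on the series--parallel construction. Since $x$ is a heterogeneous Nash flow, the selfish traffic satisfies the Wardrop condition~\eqref{eq:wardrop}: every path of $\psets$ carrying selfish flow shares a common latency, say $C_s$, so the left-hand side equals $\rs C_s$. Likewise, in the all-selfish flow $\bx$ the type-s traffic equalizes its latency at a common value $\bar{C}_s$, so the right-hand side equals $\rs\bar{C}_s$. Hence it suffices to show $C_s\leq\bar{C}_s$. Observe that $G$ and $\bar{G}$ share the same network, path sets, and masses $\rs,\ra$; they differ only in the altruists' objective (marginal cost versus latency). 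Thus the claim says precisely that switching the altruistic mass from marginal-cost to latency minimization cannot help the selfish users. This framing already explains why the series--parallel hypothesis is indispensable: on general networks the statement is false (this is the very perversity the paper studies), so any correct argument must exploit the recursive structure rather than a topology-agnostic variational inequality.

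I would prove $C_s\leq\bar{C}_s$ by structural induction, carrying the hypothesis that for every series--parallel subnetwork and every split of selfish/altruistic mass, the selfish equilibrium latency with marginal-cost altruists is at most the selfish equilibrium latency with selfish ``altruists.'' The base case of a single edge is immediate, as there is a unique path and the two flows coincide. For a series composition $G=G_1\cdot G_2$, all traffic traverses both factors, so each factor inherits the full masses and the selfish equilibrium latency decomposes additively across $G_1$ and $G_2$ (and identically for $\bx$); applying the induction hypothesis factorwise and summing settles this case.

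The parallel composition $G=G_1\,\|\,G_2$ is the crux, because the flow now splits differently in the two scenarios. Let $t_i$ and $\bar{t}_i$ be the total mass entering $G_i$ under $x$ and $\bx$, and suppose for contradiction that $C_s>\bar{C}_s$. For any factor $G_i$ carrying selfish traffic in $x$, the induction hypothesis together with monotonicity of the all-selfish equilibrium latency $\bar{L}_{G_i}(\cdot)$ in demand forces $t_i>\bar{t}_i$: to sustain selfish cost exceeding $\bar{C}_s$, that factor must carry strictly more flow than it does in $\bx$. If selfish traffic uses both factors, this contradicts flow conservation $t_1+t_2=\bar{t}_1+\bar{t}_2$. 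The remaining case---selfish traffic confined to one factor, say $G_1$, with $G_2$ carrying only altruists---is the genuine obstacle: I must exhibit a profitable selfish deviation into $G_2$. This rests on the key structural fact that altruistic (marginal-cost) routing of a given mass always leaves some path whose latency is at most the latency of the selfish equilibrium carrying that same mass; combined with $t_2\leq\bar{t}_2$ and demand-monotonicity, this makes $G_2$'s cheapest path strictly cheaper than $C_s$, contradicting equilibrium.

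The hard part will be this last sub-claim---that marginal-cost routing never raises the minimum achievable path latency above its all-selfish counterpart---which itself must be established by induction on the series--parallel structure of the factor and is exactly where the topological hypothesis does the real work. The boundary bookkeeping (unused factors, ties in the monotonicity step that a priori only yield $t_i\geq\bar{t}_i$, and the general case $\psets\neq\pseta$ in which the equilibrium cost in $\bx$ need not be common across factors) is routine but must be handled with care to keep the final inequality strict. Because the statement is quoted from~\cite[Proposition~2]{Sekar2019} and~\cite[Lemma~2]{Milchtaich2006}, a fully acceptable alternative is simply to invoke it; the sketch above indicates the self-contained route through the series--parallel recursion.
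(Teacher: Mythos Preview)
The paper supplies no proof of this lemma at all: it is stated with attribution to \cite[Proposition~2]{Sekar2019} and \cite[Lemma~2]{Milchtaich2006} and then used as a black box. Your closing remark that ``a fully acceptable alternative is simply to invoke it'' is therefore exactly what the paper does.

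Your self-contained route via structural induction on the series--parallel decomposition is the standard way such results are established in the cited sources, and you correctly isolate the genuinely hard sub-claim (that marginal-cost routing of a given mass leaves some path with latency no larger than the all-selfish equilibrium latency at that mass). Two technical points deserve tightening if you pursue the full argument. First, the induction hypothesis you state compares the heterogeneous and homogenized equilibria \emph{at the same masses}, but in the parallel step the restrictions of $x$ and $\bx$ to a factor $G_i$ carry different total masses $t_i$ and $\bar t_i$ and different selfish/altruistic splits; to chain the IH with demand-monotonicity as you intend, you need either to quantify the IH over all demand levels and splits from the outset, or to establish demand-monotonicity of the all-selfish equilibrium latency on series--parallel networks as a separate lemma (itself an inductive argument). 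Second, your series step assumes the restriction of a Nash flow to each factor is a Nash flow for a well-posed subgame; when $\psets$ or $\pseta$ is not a product set across the two series factors this is not automatic, and the arbitrary-path-set model of this paper requires you to handle that case explicitly rather than declare it routine.
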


\noindent With this result in hand, we are ready to provide the proof of Lemma~\ref{lem:low rs ub}.
In the following proofs, when $x$ is a Nash flow, we write $\Lambda(x)$ to denote the common latency experienced by members of the selfish type in $x$.\\

\noindent\emph{Proof of Lemma~\ref{lem:low rs ub}.}
Let $G$, $x$, and $\bx$ be as specified in the statement of Lemma~\ref{lem:low rs ub}.
The proof proceeds via a linearization argument enabled by the convexity of $\Lf$.
For each edge $e\in E$, let $\dl_e:=x_e-\bx_e$, so that we have
\begin{align}
\Lf(x) -\Lf(\bx) 	&= \sum_{e\in E} \left( x_e\ell_e(x_e)-\bx_e\ell_e(\bx_e) \right) \nonumber \\
			&= \sum_{e\in E} \left( (\bx_e+\dl_e)\ell_e(\bx_e+\dl_e)-\bx_e\ell_e(\bx_e)  \right) \nonumber \\
			&= \sum_{e\in E} \dl_e\left( \ell_e(\bx_e+\dl_e) + \frac{\bx_e}{\dl_e}(\ell_e(\bx_e+\dl_e)-\ell_e(\bx_e) ) \right). \nonumber \\
			&\leq \sum_{e\in E} \dl_e\left( \ell_e(\bx_e+\dl_e) + {\bx_e}\ell'_e(\bx_e+\dl_e)  \right), \label{eq:break1}
\end{align}
Where~\eqref{eq:break1} holds because each $\ell_e$ is convex and nondecreasing.
Note that since each $\bx_e+\dl_e\geq0$, (and thus $\dl_e(\bx_e+\dl_e)\geq\bx_e$), we may continue the estimate as:
\begin{align}
\Lf(x) -\Lf(\bx) 	&\leq \sum_{e\in E} \dl_e\left( \ell_e(\bx_e+\dl_e) + (\bx_e+\dl_e)\ell'_e(\bx_e+\dl_e)  \right) 	\nonumber \\
			&= \sum_{e\in E} \dl_e \ell_e^{\rm mc}(\bx_e+\dl_e)   	\nonumber \\
			&= \sum_{e\in E} \dl_e \ell_e^{\rm mc}(x_e). \label{eq:break2}
\end{align}
Define $\dl_{e,p}=\dl_p$ if $e\in p$, and $\dl_{e,p}=0$ otherwise, so that~\eqref{eq:break2} continues as
\begin{align}
\Lf(x) -\Lf(\bx) 	&\leq \sum_{e\in E} \sum_{p\in\pset} \dl_{e,p} \ell_e^{\rm mc}(x_e) \nonumber \\
			&= \sum_{p\in\pset} \dl_{p}  \ell_p^{\rm mc}(x) \nonumber \\
			&= \sum_{p\in\pset} \dls_p\ell_p^{\rm mc}(x) + \sum_{p\in\pset} \dla_p\ell_p^{\rm mc}(x),  \label{eq:break3}
\end{align}
Where for path $p$, let $\dla_p:=x_p^{\rm a}-\bx_p^{\rm a}$ and $\dls_p:=x_p^{\rm s}-\bx_p^{\rm s}$ so that $\dl_p = \dls_p+\dla_p$.
Let $\bar{p}:=\argmax_{p\in\pset^{\rm s}}\ell_p^{\rm mc}(x)$, let $\pset^+:=\{p\in\pset: \dls_p\geq0\}$, and let $\pset^-:=\{p\in\pset: \dls_p<0\}$.
Then since $\ell_p^{\rm mc}(x)\geq\ell_p(x)$ and $x^{\rm s}_{\bar{p}}>0$, we have
\begin{align}
\sum_{p\in\pset} \dls_p\ell_p^{\rm mc}(x)	&= \sum_{p\in\pset^+}\dls_p\ell_p^{\rm mc}(x) + \sum_{p\in\pset^-}\dls_p\ell_p^{\rm mc}(x) \nonumber \\
									&\leq \sum_{p\in\pset^+}\dls_p\ell_{\bar{p}}^{\rm mc}(x) + \sum_{p\in\pset^-}\dls_p\ell_{\bar{p}}(x) \nonumber \\
									&\leq \ell_{\bar{p}}(x)\bigg(\gamma\sum_{p\in\pset^+}\dls_p + \sum_{p\in\pset^-}\dls_p \bigg) \nonumber \\
									&= \Lambda^{\rm s}(x)\bigg(\gamma\sum_{p\in\pset^+}\dls_p - \sum_{p\in\pset^+}\dls_p \bigg) \nonumber \\
									&\leq \Lambda^{\rm s}(x)\rs(\gamma-1). \label{eq:break3.5}
\end{align}

Furthermore, note that by the definition of altruistic traffic at heterogeneous Nash flow, if $\dla_p>0$ and $\dla_q>0$ (respectively, $\dla_q<0$) for any paths $p,q$, it holds that $\ell_p^{\rm mc}(x)=\ell_q^{\rm mc}(x)$ (respectively, $\ell_p^{\rm mc}(x)\leq\ell_q^{\rm mc}(x)$) so that $\sum_{p\in\pset} \dla_p\ell_p^{\rm mc}(x)\leq0$.
Thus, continuing from~\eqref{eq:break3} and applying~\eqref{eq:break3.5}, we have
\begin{align}
\Lf(x) 	&\leq \Lf(\bx) + \rs(\gamma-1)\Lambda^{\rm s}(x) \nonumber \\
		&\leq \Lf(\bx) + \rs(\gamma-1)\Lambda^{\rm s}(\bx), \label{eq:break4}
\end{align}
where~\eqref{eq:break4} follows from Lemma~\ref{lem:borrowed}.
Writing $\Lf(\bx)=(\ra+\rs)\Lambda^{\rm s}(\bx)$, we thus have that
\begin{align}
\Lf(x)		&\leq (\ra+\rs)\Lambda^{\rm s}(\bx) + \rs(\gamma-1)\Lambda^{\rm s}(\bx) \nonumber \\
		&= \frac{\ra+\rs\gamma}{\ra+\rs}\Lf(\bx),
\end{align}
completing the proof of Lemma~\ref{lem:low rs ub}.
\hfill\qed

\noindent The proof of Lemma~\ref{lem:hi rs ub} proceeds similarly; here, we show that the total cost experienced by altruistic traffic in any Nash flow is upper bounded by a simple function of $\gamma$; we then use this fact to complete the proof.\\

\noindent\emph{Proof of Lemma~\ref{lem:hi rs ub}.}
We first derive the following simple upper bound on the cost experienced by altruistic traffic in $x$.
In the following, let path $\bar{p}$ be such that $\bx_{\bar{p}}^{\rm a}>0$, i.e., any path used by altruistic traffic in $\bx$.
\begin{align}
\sum_p x_p^{\rm a}\ell_p(x)	&\leq \sum_p x_p^{\rm a}\ell^{\rm mc}_p(x)	 		\nonumber \\
						&= \sum_p x_p^{\rm a}\ell^{\rm mc}_p(x) 			\nonumber \\
						&\leq  \ell^{\rm mc}_{\bar{p}}(x)\sum_p x_p^{\rm a} 	\nonumber \\ 
						&\leq  \ra\gamma\ell_{\bar{p}}(x)					\nonumber \\ 
						&\leq \ra\gamma\Lambda^{\rm s}(\bx). 			\label{eq:break5}
\end{align}
With~\eqref{eq:break5} in hand, we are ready to compute the bound:
\begin{align}
\Lf(x) 	&= \sum_p x_p^{\rm s}\ell_p(x) + \sum_p x_p^{\rm a}\ell_p(x) 		\nonumber \\
		&\leq \rs\Lambda^{\rm s}(\bx) + \ra\gamma\Lambda^{\rm s}(\bx) 	\label{eq:bound1} \\
		&= \frac{\rs+\ra\gamma}{\rs+\ra}\Lf(\bx). 						\label{eq:bound2}
\end{align}
Inequality~\eqref{eq:bound1} follows from Lemma~\ref{lem:borrowed} and~\eqref{eq:break5}, and~\eqref{eq:bound2} follows from the fact that $\bx$ is an all-selfish Nash flow, and thus all traffic experiences the common latency $\Lambda^{\rm s}(\bx)$.
\hfill \qed

\noindent We end with a proof of Lemma~\ref{lem:tight}, giving a family of example networks whose Nash flow total latencies achieve the bound given in Lemmas~\ref{lem:low rs ub} and~\ref{lem:hi rs ub} for all admissible $\gamma$ and $\rs$.
Interestingly, a single network topology consisting of only 3 links suffices to prove the matching lower bound, indicating that the perversity index (much like the price of anarchy) has a degree of independence from network topology~\cite{Roughgarden2003}.\\

\noindent\emph{Proof of Lemma~\ref{lem:tight}.}
Let $\rs\in[0,1]$ (so that $\ra=1-\rs$) and let $\gamma\geq 1$.
Consider the following nonatomic congestion game, also depicted generally in Figure~\ref{fig:tight}.
This game occurs on a parallel network of 3 edges, so we shall refer to edges as \emph{paths}, write the path set $\pset=\{1,2,3\}$, and write the latency of path $i$ as $\ell_i(x_i)$.
Let altruists have access only to paths $1$ and $2$, and selfish traffic have access to all paths, or $\pseta=\{1,2\}$ and $\psets=\pset$.
Let $r:=\max\{\rs,\ra\}$, so that $1-r\leq \min\{\rs,\ra\}$.
Let latency function $\bell$ be chosen%
\footnote{
Note that this is always possible subject to mild conditions on the considered class of latency functions; specifically, we require that $\ell(x)$ implies the existence of $\alpha\ell(x)$ and $\ell(\alpha x)$ for all $\alpha>0$.
}
such that $\bell(r)=1$ and $\bell^{\rm mc}(r)=\gamma$.
The path latency functions are selected to be $\ell_1(x_1)=\gamma$, $\ell_2(x_2)=\bell(x_2)$, and $\ell_3(x_3)=1$ as depicted in Figure~\ref{fig:tight}.

\begin{figure}
	\centering
	\includegraphics[width=0.95\textwidth]{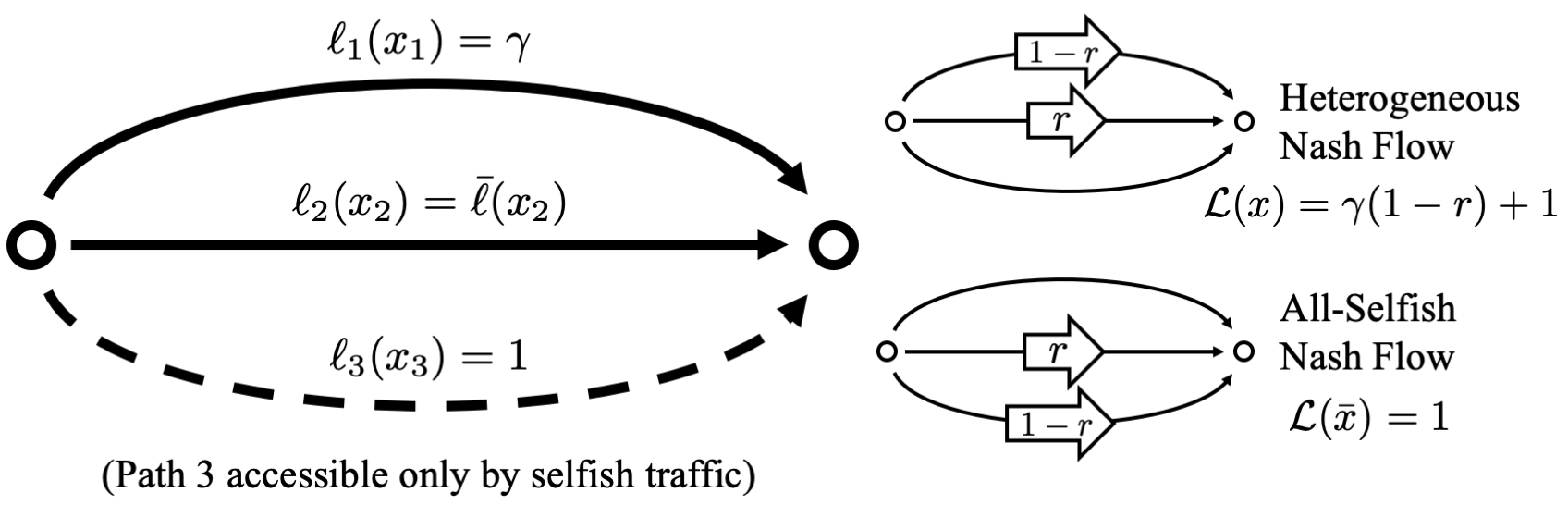}
	\caption{Network used to generate matching lower bound in Lemma~\ref{lem:tight}.}
	\label{fig:tight}\vs\vs\vs\vs
\end{figure}
First, we claim that the flow $x:=(1-r,r,0)$ with $\xa_1=1-r$ is a Nash flow on this instance.
To see this, first note that this flow is feasible, since $1-r\leq \ra$ by the definition of $r$ (see above), and by the fact that selfish traffic has access to all paths.
Next, by our choice of $\bell$, the altruistic traffic incentive condition is satisfied:
\begin{equation*}
\ell^{\rm mc}_1(1-r) = \gamma \leq \bell^{\rm mc}(r) = \ell^{\rm mc}_2(r).
\end{equation*}
Again by the choice of $\bell$, the selfish traffic incentive condition between paths $2$ and $3$ is satisfied:
\begin{equation*}
\ell_2(r) = \bell(r) \leq 1 = \ell_3(0).
\end{equation*}
Finally, note that with this choice of latency functions, for all $x_2'\leq r$, selfish traffic never uses path $1$ in any Nash flow, since $\ell_1(x_1)=\gamma>1\geq\ell_2(x_2')$.
Straightforward computation shows that 
\begin{equation} \label{eq:lower bound}
\Lf(x)=(1-r)\gamma + r = \left\{
\begin{array}{ll}
\rs\gamma+(1-\rs) 	&\ \  \mbox{if } \rs\leq 1/2 \\
(1-\rs)\gamma+\rs 	&\ \  \mbox{if } \rs> 1/2.
\end{array}
\right.
\end{equation}

Having computed the total latency of a heterogeneous Nash flow, we turn now to the computation of a corresponding homogeneous selfish Nash flow.
We claim that $\bx=(0,r,1-r)$ is a homogeneous selfish Nash flow.
As before, $\bx$ is feasible because $1-r\leq\rs$.
Since $\bell(r)=1$, all of the selfish traffic is indifferent between paths $2$ and $3$, and strictly prefers them to path $1$.
All traffic experiences a latency of $1$, so $\Lf(\bx)=1$, so the desired lower bound is given directly by~\eqref{eq:lower bound}.
\hfill \qed\vs\vs\vs

\section{Conclusions}

We have shown the first tight bounds on the perversity index for heterogeneous altruistic populations for series-parallel networks.
Future work will focus on extending these bounds to populations with diverse altruism levels, and exploring how these arguments can be extended and generalized to arbitrary cost function biases.

\bibliographystyle{splncs04}
\bibliography{../../library/library}

\begin{thebibliography}{10}
\providecommand{\url}[1]{\texttt{#1}}
\providecommand{\urlprefix}{URL }
\providecommand{\doi}[1]{https://doi.org/#1}

\bibitem{Beckmann1956}
Beckmann, M.J., McGuire, C.B., Winsten, C.B.: {Studies in the Economics of
  Transportation} (1955). \doi{10.1057/jors.1980.83}

\bibitem{Bonifaci2010}
Bonifaci, V., Harks, T., Shaefer, G.: {The Impact of Stackelberg Routing in
  General Networks}. Mathematics of Operations Research  \textbf{32}(2),  1--17
  (2010)

\bibitem{Brown2017b}
Brown, P.N., Marden, J.R.: {Fundamental Limits of Locally-Computed Incentives
  in Network Routing}. In: American Control Conf. pp. 5263--5268. Seattle, WA,
  USA (2017)

\bibitem{Brown2020b}
Brown, P.N., Marden, J.R.: {Can Taxes Improve Congestion on All Networks?} IEEE
  Transactions on Control of Network Systems pp.~1--1 (2020).
  \doi{10.1109/TCNS.2020.2992679}

\bibitem{Chen2014}
Chen, P.a., {De Keijzer}, B., Kempe, D., Shaefer, G.: {Altruism and Its Impact
  on the Price of Anarchy}. ACM Trans. Economics and Computation  \textbf{2}(4)
  (2014)

\bibitem{Cheng2016}
Cheng, Y., Langbort, C.: {A model of informational nudging in transportation
  networks}. In: 2016 IEEE 55th Conference on Decision and Control, CDC 2016.
  pp. 7598--7604 (2016). \doi{10.1109/CDC.2016.7799443}

\bibitem{Cole2003}
Cole, R., Dodis, Y., Roughgarden, T.: {Pricing network edges for heterogeneous
  selfish users}. In: Proc. of the 35th ACM symp. on Theory of computing. pp.
  521--530. New York, New York, USA (2003). \doi{10.1145/780615.780618},
  \url{http://portal.acm.org/citation.cfm?doid=780542.780618}

\bibitem{Ferguson2020a}
Ferguson, B.L., Brown, P.N., Marden, J.R.: {Carrots or Sticks? The
  Effectiveness of Subsidies and Tolls in Congestion Games}. In: 2020 American
  Control Conference. pp. 5370--5375 (2020),
  \url{http://arxiv.org/abs/1910.02343}

\bibitem{Gupta2019}
Gupta, A.K., Banerjee, A.: {Modeling and Analysis of Heterogeneous Traffic
  Networks with Anarchists and Socialist Traffic}. In: 2019 IEEE International
  Conference on Advanced Networks and Telecommunications Systems (ANTS).
  pp.~1--6. IEEE (dec 2019). \doi{10.1109/ANTS47819.2019.9117908},
  \url{https://ieeexplore.ieee.org/document/9117908/}

\bibitem{Kleer2017}
Kleer, P., Sch{\"{a}}fer, G.: {Path deviations outperform approximate stability
  in heterogeneous congestion games}. In: International Symposium on
  Algorithmic Game Theory. pp. 212--224 (2017),
  \url{http://arxiv.org/abs/1707.01278}

\bibitem{Lianeas2016}
Lianeas, T., Nikolova, E., Stier-Moses, N.E.: {Asymptotically tight bounds for
  inefficiency in risk-averse selfish routing}. In: IJCAI International Joint
  Conference on Artificial Intelligence. vol. 2016-Janua, pp. 338--344 (2016)

\bibitem{Mas-Colell1984}
Mas-Colell, A.: {On a Theorem of Schmeidler}. Mathematical Economics
  \textbf{13},  201--206 (1984)

\bibitem{Meir2015b}
Meir, R., Parkes, D.: {Playing the Wrong Game: Smoothness Bounds for Congestion
  Games with Behavioral Biases}. Performance Evaluation Review  \textbf{43}(3),
   67--70 (2015). \doi{10.1145/2847220.2847242},
  \url{http://arxiv.org/abs/1411.1751}

\bibitem{Milchtaich2006}
Milchtaich, I.: {Network topology and the efficiency of equilibrium}. Games and
  Economic Behavior  \textbf{57}(2),  321--346 (2006).
  \doi{10.1016/j.geb.2005.09.005}

\bibitem{Nikolova2015}
Nikolova, E., Stier-Moses, N.E., Lianeas, T.: {The Burden of Risk Aversion in
  Mean-Risk Selfish Routing} (2015). \doi{10.1145/2764468.2764485}

\bibitem{Piliouras2013}
Piliouras, G., Nikolova, E., Shamma, J.S.: {Risk sensitivity of price of
  anarchy under uncertainty}. ACM Transactions on Economics and Computation
  \textbf{5}(1) (2016). \doi{10.1145/2492002.2482578},
  \url{http://dl.acm.org/citation.cfm?doid=2492002.2482578}

\bibitem{Ratliff2018}
Ratliff, L.J., Dong, R., Sekar, S., Fiez, T.: {A Perspective on Incentive
  Design: Challenges and Opportunities}. Annual Review of Control, Robotics,
  and Autonomous Systems  \textbf{2}(1),  1--34 (2018).
  \doi{10.1146/annurev-control-053018-023634}

\bibitem{Roughgarden2003}
Roughgarden, T.: {The price of anarchy is independent of the network topology}.
  J. Computer and System Sciences  \textbf{67}(2),  341--364 (sep 2003).
  \doi{10.1016/S0022-0000(03)00044-8},
  \url{http://linkinghub.elsevier.com/retrieve/pii/S0022000003000448}

\bibitem{Roughgarden2004}
Roughgarden, T.: {Stackelberg Scheduling Strategies}. SIAM Journal on Computing
   \textbf{33}(2),  332--350 (2004). \doi{10.1137/S0097539701397059}

\bibitem{Roughgarden2005}
Roughgarden, T.: {Selfish Routing and the Price of Anarchy}. MIT Press (2005)

\bibitem{Sandholm2002}
Sandholm, W.: {Evolutionary Implementation and Congestion Pricing}. The Review
  of Economic Studies  \textbf{69}(3),  667--689 (2002),
  \url{http://restud.oxfordjournals.org/content/69/3/667.short}

\bibitem{Sekar2019}
Sekar, S., Zheng, L., Ratliff, L.J., Zhang, B.: {Uncertainty in Multi-Commodity
  Routing Networks: When does it help?} IEEE Transactions on Automatic Control
  pp.~1--1 (2019). \doi{10.1109/TAC.2019.2962102}

\bibitem{Wardrop1952}
Wardrop, J.: {Some Theoretical Aspects of Road Traffic Research}. In: Proc. of
  the Institution of Civil Engineers, Part II, Vol. 1, No. 36. pp. 352--362
  (1952)

\end{thebibliography}
\end{document}